\DeclareMathAlphabet{\mathpzc}{OT1}{pzc}{m}{it}
\newtheorem{definition}{Definition}
\newtheorem{lemma}{Lemma}
\newtheorem{theorem}{Theorem}
\newtheorem{corollary}{Corollary}
\begin{document}
% can use linebreaks \\
\title{A New Enforcement on Declassification with Reachability Analysis}

\author{\IEEEauthorblockN{Cong Sun\qquad Liyong Tang\qquad Zhong
Chen}
\IEEEauthorblockA{Institute of Software, School of Electronics Engineering and Computer Science, Peking University, China\\
Key Laboratory of High Confidence Software Technologies, Ministry of Education, China\\
Key Laboratory of Network and Software Security Assurance, Ministry of Education, China\\
Email: \{suncong,tly,chen\}@infosec.pku.edu.cn}
%\and
%\IEEEauthorblockN{Authors Name/s per 2nd Affiliation (Author)}
%\IEEEauthorblockA{line 1 (of Affiliation): dept. name of organization\\
%line 2: name of organization, acronyms acceptable\\
%line 3: City, Country\\
%line 4: Email: name@xyz.com}
}

% conference papers do not typically use \thanks and this command
% is locked out in conference mode. If really needed, such as for
% the acknowledgment of grants, issue a \IEEEoverridecommandlockouts
% after \documentclass

% for over three affiliations, or if they all won't fit within the width
% of the page, use this alternative format:
%
%\author{\IEEEauthorblockN{Michael Shell\IEEEauthorrefmark{1},
%Homer Simpson\IEEEauthorrefmark{2},
%James Kirk\IEEEauthorrefmark{3},
%Montgomery Scott\IEEEauthorrefmark{3} and
%Eldon Tyrell\IEEEauthorrefmark{4}}
%\IEEEauthorblockA{\IEEEauthorrefmark{1}School of Electrical and Computer Engineering\\
%Georgia Institute of Technology,
%Atlanta, Georgia 30332--0250\\ Email: see http://www.michaelshell.org/contact.html}
%\IEEEauthorblockA{\IEEEauthorrefmark{2}Twentieth Century Fox, Springfield, USA\\
%Email: homer@thesimpsons.com}
%\IEEEauthorblockA{\IEEEauthorrefmark{3}Starfleet Academy, San Francisco, California 96678-2391\\
%Telephone: (800) 555--1212, Fax: (888) 555--1212}
%\IEEEauthorblockA{\IEEEauthorrefmark{4}Tyrell Inc., 123 Replicant Street, Los Angeles, California 90210--4321}}

\maketitle

\begin{abstract}
Language-based information flow security aims to
decide whether an action-observable program can unintentionally leak
confidential information if it has the authority to access
confidential data. Recent concerns about declassification polices
have provided many choices for practical intended information
release, but more precise enforcement mechanism for these policies
is insufficiently studied. In this paper, we propose a security
property on the where-dimension of declassification and present an
enforcement based on automated verification. The approach
automatically transforms the abstract model with a variant of
self-composition, and checks the reachability of illegal-flow state
of the model after transformation. The self-composition is equipped
with a store-match pattern to reduce the state space and to model
the equivalence of declassified expressions in the premise of
property. The evaluation shows that our approach is more precise
than type-based enforcement.
\end{abstract}

\begin{IEEEkeywords}
information flow security; declassification; pushdown system;
program analysis
\end{IEEEkeywords}

% For peer review papers, you can put extra information on the cover
% page as needed:
% \ifCLASSOPTIONpeerreview
% \begin{center} \bfseries EDICS Category: 3-BBND \end{center}
% \fi
%
% For peerreview papers, this IEEEtran command inserts a page break and
% creates the second title. It will be ignored for other modes.
\IEEEpeerreviewmaketitle

\section{Introduction}

Information flow security is concerned with finding new techniques
to ensure that the confidential data will not be illegally leaked to
the public observation. The topic is popular at both language
level and operating system level. Language-based techniques have
been pervasively adopted in the study on information flow security.
This is comprehensively surveyed in \cite{Sabelfeld03}.
Noninterference\cite{DBLP:conf/sp/GoguenM82} is commonly known as
the baseline property of information flow security. The
semantic-based definition of
noninterference\cite{DBLP:journals/lisp/SabelfeldS01} on batch-job
model characterizes a security condition specifying that the system
behavior is indistinguishable from a perspective of attacker
regardless of the confidential inputs. Noninterference is criticized
for the restriction that forbids any flow from \emph{high} to
\emph{low}. It will influence the usability of system because the
deliberate release is pervasive in many situations, e.g. password
authentication, online shopping and encryption. Therefore, it is
important to specify more relaxed and practical policies for
real application scenarios and develop precise enforcement
mechanisms for these policies.

The confidentiality aspect of information downgrading, i.e.
declassification\cite{DBLP:conf/sosp/MyersL97}, allows information
release with different intentions along four
dimensions\cite{DBLP:journals/jcs/SabelfeldS09}: \emph{what} is
released, \emph{where} does the release happen, \emph{when} the
information can be released and \emph{who} releases it. The security
policy we propose is on the where-dimension. On this dimension, there
have been several polices, e.g. \emph{intransitive
noninterference}\cite{DBLP:conf/aplas/MantelS04},
\emph{non-disclosure}\cite{DBLP:journals/jcs/MatosB09},
\emph{WHERE}\cite{DBLP:conf/esop/MantelR07}, \emph{flow
locks}\cite{DBLP:conf/esop/BrobergS06}, and \emph{gradual
release}\cite{DBLP:conf/sp/AskarovS07}. Each of them leverages a certain category of
type system to enforce the security policy.

In this work, we first use an approach based on automated
verification to enforce declassification policy on the
where-dimension. As a flow-sensitive and context-sensitive
technique, automated verification has been used as an enforcement to
noninterference on both imperative
languages\cite{DBLP:conf/csfw/BartheDR04,DBLP:conf/sas/TerauchiA05}
and object-oriented
languages\cite{DBLP:conf/esorics/Naumann06,suncong2}. In these works
declassification is only discussed in
\cite{DBLP:conf/sas/TerauchiA05}, where the specific property
\emph{relaxed noninterference}\cite{DBLP:conf/popl/LiZ05} is mostly
on the what-dimension.

The approaches based on automated verification usually rely on some
form of self-composition\cite{DBLP:conf/csfw/BartheDR04} that
composes the program model with a variable-renamed copy to reduce
the security property on original model to a safety property on the
model after transformation. In our previous work\cite{suncong2}, we
have developed a framework that uses reachability analysis to ease
the specification of temporal logic formula or the manual assertion
encoding partial correctness judgement. The self-composition doubles
the size of memory store and largely increases the state space of
model. When the I/O channels are considered, this effect becomes
more serious since each store of channel is modeled explicitly. On
the other hand, the security property often requires the equivalence
of declassified expressions to be satisfied. Therefore in our
enforcement we propose a store-match pattern to 1. avoid duplicating
the output channels, and 2. facilitate the self-composition by
modeling the equivalence of declassified expressions in the premise
of security property. We also evaluated the similarity of the
properties and the preciseness of our enforcement mechanism compared
with type system.

The main contributions of the paper include: (i) We propose a
more relaxed security property enforceable with automated
verification on the where-dimension; (ii) We give a flow-sensitive
and context-sensitive enforcement based on reachability analysis of
pushdown system. We show the mechanism is more precise than
type-based approaches; (iii) We propose a store-match pattern that
can be in common use for automated verifications to reduce the state
space of model and the cost of security analysis.

The rest of the paper is organized as follows. In
Section~\ref{sec:language}, we introduce the language model and the
baseline property. In Section~\ref{sec:property}, we define the
where-security and prove the compliance of property with the prudent
principles. Section~\ref{sec:enforcement} describes the enforcement
mechanism. We show the evaluation in Section~\ref{sec:evaluation}
and conclude in Section~\ref{sec:conclusion}.

\section{Program Model and Baseline Property}\label{sec:language}

We use a sequential imperative language with I/O channels as the
presentation language to illustrate our approach. The syntax is
listed in Fig.\ref{fig:syntax}. The language is deterministic. The
primitive $\mathpzc{declass}$ stands for declassification that
downgrades the confidential data of expression $e$ to be assigned to
variable $x$ with a lower security domain. Here $x$ can be
considered as a low-level sink of data observable to the attacker.
$\mathcal{I}$ and $\mathcal{O}$ are respectively the set of input
and output channels. They are formally defined as a mapping from
each channel identifier $i$ to a linear list, e.g. $\mathcal{I}_i$
resp. $\mathcal{O}_i$. The command $\textit{input}(x,\mathcal{I}_i)$
indicates that the input from $\mathcal{I}_i$ is assigned to $x$,
and the command $\textit{output}(e,\mathcal{O}_i)$ stores the value
of expression $e$ into the correct position of $\mathcal{O}_i$.

\begin{figure}[!t]\footnotesize\centering
\begin{align*}
e ::= & v\mid x\mid e\oplus e'\\
C ::= & \textbf{skip}\mid x:=e \mid x:=\mathpzc{declass}(e) \mid
\textbf{if }e\textbf{ then }C\textbf{ else }C' \mid\\
 &\textbf{while }e\textbf{ do }C \mid C;C' \mid \textit{input}(x,\mathcal{I}_i) \mid \textit{output}(e,\mathcal{O}_i)
\end{align*}
\caption{Program Syntax}\label{fig:syntax}
\end{figure}

\begin{figure}[!t]\footnotesize\centering
\begin{equation*}
\frac{\displaystyle }{\displaystyle
(\mu,\mathcal{I},\mathcal{O},p,q,\textbf{skip};C)\rightarrow
(\mu,\mathcal{I},\mathcal{O},p,q,C)}
\end{equation*}
\begin{equation*}
\frac{\displaystyle \mu(e)=v}{\displaystyle
(\mu,\mathcal{I},\mathcal{O},p,q,x:=e;C)\rightarrow (\mu[x\mapsto
v],\mathcal{I},\mathcal{O},p,q,C)}
\end{equation*}
\begin{equation*}
\frac{\displaystyle \mu(e)=b}{\displaystyle
(\mu,\mathcal{I},\mathcal{O},p,q,\textbf{if }e\textbf{ then
}C_\textbf{true}\textbf{ else
}C_\textbf{false})\rightarrow(\mu,\mathcal{I},\mathcal{O},p,q,C_b)}
\end{equation*}
\begin{equation*}
\frac{\displaystyle \mu(e)=\textbf{true}}{\displaystyle
(\mu,\mathcal{I},\mathcal{O},p,q,\textbf{while }e\textbf{ do
}C)\rightarrow(\mu,\mathcal{I},\mathcal{O},p,q,C;\textbf{while
}e\textbf{ do }C)}
\end{equation*}
\begin{equation*}
\frac{\displaystyle \mu(e)=\textbf{false}}{\displaystyle
(\mu,\mathcal{I},\mathcal{O},p,q,\textbf{while }e\textbf{ do
}C)\rightarrow(\mu,\mathcal{I},\mathcal{O},p,q,\textbf{skip})}
\end{equation*}
\begin{equation*}
\frac{\displaystyle
(\mu,\mathcal{I},\mathcal{O},p,q,C_1)\rightarrow(\mu',\mathcal{I}',\mathcal{O}',p',q',C_1')}{\displaystyle
(\mu,\mathcal{I},\mathcal{O},p,q,C_1;C_2)\rightarrow(\mu',\mathcal{I}',\mathcal{O}',p',q',C_1';C_2)}
\end{equation*}
\begin{equation*}
\frac{\displaystyle \mathcal{I}_i[p_i]=v \qquad
p_i'=p_i+1}{\displaystyle
(\mu,\mathcal{I},\mathcal{O},p,q,\textit{input}(x,\mathcal{I}_i);C)\rightarrow(\mu[x\mapsto
v],\mathcal{I},\mathcal{O},p',q,C)}
\end{equation*}
\begin{equation*}
\frac{\displaystyle \mu(e)=\mathcal{O}_i'[q_i] \qquad
q_i'=q_i+1}{\displaystyle
(\mu,\mathcal{I},\mathcal{O},p,q,\textit{output}(e,\mathcal{O}_i);C)\rightarrow(\mu,\mathcal{I},\mathcal{O}',p,q',C)}
\end{equation*}
\begin{equation*}
\frac{\displaystyle \mu(e)=v \qquad
\sigma(e)\preceq\sigma(x)}{\displaystyle
(\mu,\mathcal{I},\mathcal{O},p,q,x:=\mathpzc{declass}(e);C)\rightarrow(\mu[x\mapsto
v],\mathcal{I},\mathcal{O},p,q,C)}
\end{equation*}
\begin{equation*}
\frac{\displaystyle \mu(e)=v \qquad \sigma(x)\prec\sigma(e) \qquad
\sigma(e)\rightsquigarrow \sigma(x)}{\displaystyle
(\mu,\mathcal{I},\mathcal{O},p,q,x:=\mathpzc{declass}(e);C)\rightarrow_d
(\mu[x\mapsto v],\mathcal{I},\mathcal{O},p,q,C)}
\end{equation*}
\caption{Operational Semantics}\label{fig:semantics}
\end{figure}

The computation is modeled by the small-step operational semantics
in Fig.\ref{fig:semantics}. The inductive rules are defined over
configurations of the form $(\mu,\mathcal{I},\mathcal{O},p,q,C)$.
$\mu:\textit{Var}\mapsto\mathbb{N}$ is a memory store mapping
variables to values and $C$ is the command to be executed. $p$ and
$q$ are set of indices. $p_i$ denotes the index of next element to
be input from $\mathcal{I}_i$, and $q_i$ is the index of location of
$\mathcal{O}_i$ where the next output value will be stored. The
elements in $p$ and $q$ are explicitly increased by the computation
of inputs and outputs.

The security policy is a tuple
$(\mathcal{D},\preceq,\rightsquigarrow,\sigma)$ where
$(\mathcal{D},\preceq)$ is a finite security lattice on security
domains and $\rightsquigarrow$ is an exceptional downgrading
relation of security domains
($\rightsquigarrow\cap\preceq=\emptyset$) statically gathered from
the program. Let
$\sigma:\textit{Var}\cup\mathcal{I}\cup\mathcal{O}\mapsto\mathcal{D}$
be the mapping from I/O channels and variables to security domains,
and let $\sigma(e)\equiv\bigsqcup_{x\in e}\sigma(x)$ be the least
upper bound of the security domains of variables contained in $e$.
When command $x:=\mathpzc{declass}(e)$ in program has
$\sigma(x)\prec\sigma(e)$, the $\mathpzc{declass}$ operation
performs a real downgrading from some variable in $e$ and only then
an element $(\sigma(e),\sigma(x))$ is contained in the relation
$\rightsquigarrow$, otherwise the operation is identical to an
ordinary assignment. We label the transition of declassification
with $\rightarrow_d$ in Fig.\ref{fig:semantics}. The security policy
is different from the MLS policy with exceptions proposed in
\cite{DBLP:conf/aplas/MantelS04,DBLP:conf/esop/MantelR07,DBLP:conf/ifip1-7/LuxM08},
where the set of exceptional relations $\rightsquigarrow$ is
independent to the declassification operations. In our policy the
exceptions are gathered from the $\mathpzc{declass}$ commands. Our
treatment is reasonable since developer should have right to decide
the exception when they use the primitive $\mathpzc{declass}$
explicitly. This is also supported in other work,
e.g.\cite{DBLP:conf/isss2/SabelfeldM03}.

We specify noninterference with the semantic-based
PER-model\cite{DBLP:journals/lisp/SabelfeldS01}. Intuitively
speaking, it specifies a relation between states of any two
correlative runs of program, which is variation in the confidential
initial state cannot cause variation in the public final state. In
another word, the runs starting from indistinguishable initial
states derive indistinguishable final states as well. For the
language with I/Os, the indistinguishability relation on memory
stores and I/O channels with respect to certain security domain
$\ell$ is defined as below.
\begin{definition}[$\ell$-indistinguishability]Memory store $\mu_i$
and $\mu_j$ are indistinguishable on $\ell(\ell\in\mathcal{D})$,
denoted by $\mu_i\sim_\ell\mu_j$, iff $\forall x\in
\textit{Var}.\sigma(x)\preceq\ell\Rightarrow\mu_i(x)=\mu_j(x)$. For
input channel $\mathcal{I}_i$ and $\mathcal{I}_j$,
$\mathcal{I}_i\sim_\ell\mathcal{I}_j$ iff
$(\sigma(\mathcal{I}_i)=\sigma(\mathcal{I}_j)\preceq\ell)\wedge(p_i=p_j\wedge\forall
0\leq k<p_i.\mathcal{I}_i[k]=\mathcal{I}_j[k])$. Similarly, for
output channel $\mathcal{O}_i$ and $\mathcal{O}_j$,
$\mathcal{O}_i\sim_\ell\mathcal{O}_j$ iff
$(\sigma(\mathcal{O}_i)=\sigma(\mathcal{O}_j)\preceq\ell)\wedge(q_i=q_j\wedge\forall
0\leq k<q_i.\mathcal{O}_i[k]=\mathcal{O}_j[k])$.
\end{definition}
For the two observable channels with same security domain, the
indistinguishable linear lists should have the same length and
identical content. Let $\mathcal{I}^\ell$ be the set of input
channels with security domain $\ell'(\ell'\preceq\ell)$. If the set
$\mathcal{I}$ and $\mathcal{I}'$ have the same domain, e.g. as the
inputs of the same program, we can use
$\mathcal{I}\sim_\ell\mathcal{I}'$ to express $\forall
i.\mathcal{I}_i\in\mathcal{I}^\ell\Rightarrow
\mathcal{I}_i\sim_\ell\mathcal{I}_i'$. The noninterference
formalized here takes into consideration the I/O channels and is
therefore different from what for batch-job model\cite{Sabelfeld03}.
It is given as follows.
\begin{definition}[Noninterference]\label{def:NI}
Program $P$ satisfies noninterference w.r.t. security domain
$\ell_0$, iff $\forall\ell\preceq\ell_0$, we have\\
$\left(\begin{array}{l}
\forall\mathcal{I},\mu,\mathcal{I}',\mu',\mathcal{O}_f,\mu_f.(\mu,\mathcal{I},\mathcal{O},p,q,P)\rightarrow^*\\
(\mu_f,\mathcal{I},\mathcal{O}_f,p_f,q_f,\textbf{skip})\wedge
\mathcal{I}\sim_\ell\mathcal{I}'\wedge\mu\sim_\ell\mu'
\end{array}\right)\Rightarrow\\
\left(\begin{array}{l}
\exists\mathcal{O}_f',\mu_f'.(\mu',\mathcal{I}',\mathcal{O}',p',q',P)\rightarrow^*\\
(\mu_f',\mathcal{I}',\mathcal{O}_f',p_f',q_f',\textbf{skip})\wedge\mathcal{O}_f\sim_\ell\mathcal{O}_f'\wedge\mu_f\sim_\ell\mu_f'
\end{array}\right)$.
\end{definition}
In this definition, the noninterference property is related to a
security domain $\ell_0$. The content of channels with security
domain $\ell'(\ell'\succ\ell_0)$ is unobservable and irrelevant to
the property. A more specific way to define noninterference is to
require $\ell_0=\bigsqcup\mathcal{D}$. That means the proposition in
Definition \ref{def:NI} has to be satisfied for each security domain
in $\mathcal{D}$. We use this definition in the following. Our
definition adopts a manner to consider the indistinguishability of
the initial and final states but not to characterize the relation in
each computation step as did by the bisimulation-based
approach\cite{DBLP:conf/csfw/SabelfeldS00}. Another use of the
security domain of variables is to specify where a valid
declassification occurs. This will be discussed below.

\section{Where-Security and Prudent Principles}\label{sec:property}

In this section, we give a security condition to control the
legitimate release of confidential information on the
where-dimension of security goals. It considers both the code
locality where the release occurs and the level locality to which
security domain the release is legal. Let $\twoheadrightarrow$
represent a (possible empty) sequence of declassification-free
transitions. A trace of computations is separated to the
declassifications labeled with $\rightarrow_d$ and
declassification-free computation sequences. The
\emph{where-security} is formally specified as below.
\begin{definition}[Where-Security]
Program $P$ satisfies \emph{where-security} iff
$\forall\ell\in\mathcal{D}$, we have\\
{\small
$\forall\mathcal{I},\mu,\mathcal{I}',\mu'. \exists n\ge 0:$\\
$\left(
\begin{array}{l}
\forall\mathcal{O}_{n+1},\mu_{n+1}:(\mu,\mathcal{I},\mathcal{O},p,q,P)\lbrack\twoheadrightarrow(\mu_{k_s},\mathcal{I},\mathcal{O}_k,p_k,q_k,\\
x_k:=\mathpzc{declass}(e_k);P_k)\rightarrow_d(\mu_{k_t},\mathcal{I},\mathcal{O}_k,p_k,q_k,P_k)\rbrack_{k=1..n}\\
\twoheadrightarrow(\mu_{n+1},\mathcal{I},\mathcal{O}_{n+1},p_{n+1},q_{n+1},\textbf{skip})\\
\wedge \mathcal{I}\sim_\ell\mathcal{I}' \wedge \mu\sim_\ell\mu'
\end{array}\right)
\Rightarrow$\\
$\left(\begin{array}{l}\exists\mathcal{O}_{n+1}',\mu_{n+1}':
(\mu',\mathcal{I}',\mathcal{O}',p',q',P)\lbrack\twoheadrightarrow(\mu_{k_s}',\mathcal{I}',\mathcal{O}_k',p_k',q_k',\\
x_k':=\mathpzc{declass}(e_k');P_k')\rightarrow_d(\mu_{k_t}',\mathcal{I}',\mathcal{O}_k',p_k',q_k',P_k')\rbrack_{k=1..n}\\
\twoheadrightarrow(\mu_{n+1}',\mathcal{I}',\mathcal{O}_{n+1}',p_{n+1},q_{n+1}',\textbf{skip})\\
\wedge \bigwedge_{k=1..n}(\mu_{k_s}\sim_\ell\mu_{k_s}'\wedge
\mu_{k_s}(e_k)=\mu_{k_s}'(e_k')\Rightarrow
\mu_{k_t}\sim_\ell\mu_{k_t}')\\
\wedge
\left(\begin{array}{l}\bigwedge_{k=1..n}(\mu_{k_s}(e_k)=\mu_{k_s}'(e_k'))\Rightarrow\\
\mu_{n+1}\sim_\ell\mu_{n+1}'\wedge\mathcal{O}_{n+1}\sim_\ell\mathcal{O}_{n+1}'\end{array}\right)\end{array}\right)$}
\end{definition}
Intuitively speaking, when the indistinguishable relation on the
final states is violated, the contrapositive implies that it is
caused by the variation of declassified expressions. This variation
is indicated valid by the premise our property. If the leakage of
confidential information is caused by a computation other than the
primitive $\mathpzc{declass}$, it will be captured because without
constraining the equality of released expression, the final
indistinguishability cannot hold. Our where-security property is
more relaxed than WHERE
\cite{DBLP:conf/esop/MantelR07,DBLP:conf/ifip1-7/LuxM08} which uses
strong-bisimulation and requires each declassification-free
computation step meets the baseline noninterference. We can use
explicit final output of public variables to adapt the judgement of
$\mu_{n+1}\sim_\ell\mu_{n+1}'$ to the judgement of
$\mathcal{O}_{n+1}\sim_\ell\mathcal{O}_{n+1}'$.

Sabelfeld and Sands\cite{DBLP:journals/jcs/SabelfeldS09} clarify
four basic prudent principles for declassification policies as
sanity checks for the new definition: \emph{semantic consistency},
\emph{conservativity}, \emph{monotonicity of release}, and
\emph{non-occlusion}. Our where-security property can be proved to
comply with the first three principles. Let $P[C]$ represent a
program contains command $C$. $P[C'/C]$ substitutes each occurrence
of $C$ in $P$ with $C'$. The principles with respect to the
where-security are defined as follows.

\begin{lemma}[Semantic Consistency]\label{lemma:consist}
Suppose $C$ and $C'$ are declassification-free commands and
semantically equivalent on the same domain of configuration. If
program $P[C]$ is where-secure, the $P[C'/C]$ is where-secure.
\end{lemma}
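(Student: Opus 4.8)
The plan is to establish a step-by-step correspondence between the computations of $P[C]$ and $P[C'/C]$ and then transfer the where-security conditions across this correspondence. Because the language is deterministic, each initial configuration induces a unique maximal run; so for a fixed $\ell$ and fixed initial states $\mathcal{I},\mu,\mathcal{I}',\mu'$, I would first read off from the unique run of $P[C'/C]$ starting at $(\mu,\mathcal{I},\mathcal{O},p,q,P[C'/C])$ the number $n$ of declassification steps it performs, taking $n=0$ vacuously when that run diverges, since the premise requiring termination in \textbf{skip} is then false. The goal is to show that this same $n$ witnesses the where-security obligation for $P[C'/C]$, using the witness already guaranteed for $P[C]$.

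The two hypotheses on $C$ and $C'$ do exactly the work needed for the correspondence. First, since $C$ and $C'$ are declassification-free, every $\rightarrow_d$-transition in a run of $P[C]$ or of $P[C'/C]$ originates from a $\mathpzc{declass}$ command in the shared context $P[\cdot]$, never from the replaced subcommand; hence the number of declassifications, the declassified expressions $e_k$, and the decomposition of the trace into $n$ declassification-free segments punctuated by the $n$ steps $\rightarrow_d$ are identical on both sides. Second, since $C$ and $C'$ are semantically equivalent on the same domain of configuration, whenever control reaches the replaced subcommand in a configuration $(\mu_j,\mathcal{I},\mathcal{O}_j,p_j,q_j,\cdot)$, executing $C$ and executing $C'$ terminate in the same successor configuration; consequently the configurations $(\mu_{k_s},\mathcal{O}_k,p_k,q_k)$ and $(\mu_{k_t},\mathcal{O}_k)$ flanking each declassification, together with the terminal $(\mu_{n+1},\mathcal{O}_{n+1})$, coincide between the run of $P[C]$ and that of $P[C'/C]$ from the same initial state. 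I would make this precise by structural induction on the context $P[\cdot]$, equivalently by induction on the number of computation steps, matching each segment of the $P[C'/C]$-run with the corresponding segment of the $P[C]$-run.

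With the correspondence in hand the transfer is immediate. Applying the where-security of $P[C]$ to the states $\mathcal{I},\mu,\mathcal{I}',\mu'$ and the chosen $n$ yields a second run of $P[C]$ from $(\mu',\mathcal{I}',\ldots)$ with matching declassification structure satisfying the per-step conditions $\mu_{k_s}\sim_\ell\mu_{k_s}'\wedge\mu_{k_s}(e_k)=\mu_{k_s}'(e_k')\Rightarrow\mu_{k_t}\sim_\ell\mu_{k_t}'$ and the terminal condition on $\mu_{n+1}\sim_\ell\mu_{n+1}'$ and $\mathcal{O}_{n+1}\sim_\ell\mathcal{O}_{n+1}'$. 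By the correspondence applied to this primed run, it lifts to a second run of $P[C'/C]$ from $(\mu',\mathcal{I}',\ldots)$ with the same relevant components; and because every predicate in the conclusion is stated purely in terms of $\ell$-indistinguishability of memories and output channels and equality of declassified values---all preserved by the correspondence---the conclusion holds verbatim for $P[C'/C]$.

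The main obstacle I anticipate is not the transfer but making the local equivalence of $C$ and $C'$ lift soundly through an arbitrary context $P[\cdot]$: the replaced subcommand may sit inside a \textbf{while}-loop or a branch and thus be entered zero, one, or many times, and $P$ may contain several syntactic occurrences of $C$. I would address this by phrasing the equivalence as a congruence---exhibiting, for each syntactic construct of the language, that replacing $C$ by $C'$ in a subposition preserves the reachable terminal configuration---so that the induction on $P[\cdot]$ goes through uniformly across sequencing, conditionals, loops, and repeated occurrences. A secondary point to pin down is the precise reading of \emph{semantically equivalent on the same domain of configuration}; I would take it to mean that $C$ and $C'$ map each starting configuration to the same terminal configuration, agreeing on all of $\mu,\mathcal{I},\mathcal{O},p,q$, which is exactly what the correspondence above requires.
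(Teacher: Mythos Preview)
Your proposal is correct and follows essentially the same approach as the paper: establish that each run of $P[C'/C]$ corresponds to a run of $P[C]$ with identical intermediate configurations (by swapping $C'$ for $C$ via semantic equivalence) and identical declassification structure (since $C,C'$ are declassification-free), then transfer the where-security conditions across this correspondence. Your treatment is in fact considerably more careful than the paper's own proof, which handles only a single occurrence of $C'$ in the trace and does not explicitly discuss the per-step conditions $\mu_{k_s}\sim_\ell\mu_{k_s}'\wedge\mu_{k_s}(e_k)=\mu_{k_s}'(e_k')\Rightarrow\mu_{k_t}\sim_\ell\mu_{k_t}'$, the congruence issue for loops and multiple occurrences, or divergence.
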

\begin{lemma}[Conservativity]\label{lemma:conservativity}
If program $P$ is where-secure and $P$ contains no declassification,
then $P$ satisfies noninterference property.
\end{lemma}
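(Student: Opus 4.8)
The plan is to show that, for a declassification-free program, the where-security condition collapses verbatim onto Definition~\ref{def:NI}. The key observation is that the only semantic rule producing a $\rightarrow_d$ transition is the one for $x:=\mathpzc{declass}(e)$ with $\sigma(x)\prec\sigma(e)$ and $\sigma(e)\rightsquigarrow\sigma(x)$; since $P$ contains no declassification command, no configuration reachable from $P$ can ever fire this rule. Hence every transition of $P$ is declassification-free, and any terminating run of $P$ has the form $(\mu,\mathcal{I},\mathcal{O},p,q,P)\twoheadrightarrow(\mu_f,\mathcal{I},\mathcal{O}_f,p_f,q_f,\textbf{skip})$; that is, $\twoheadrightarrow$ and $\rightarrow^*$ coincide on runs of $P$.

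With this in hand, I would instantiate the existential $\exists n\ge 0$ in the definition of where-security by choosing $n=0$. This is the correct witness precisely because the run contains zero declassification steps, so the bracketed declassification segments $\lbrack\cdots\rbrack_{k=1..n}$ vanish and the premise reduces to the single hypothesis that $P$ runs from $(\mu,\mathcal{I})$ to $\textbf{skip}$ together with $\mathcal{I}\sim_\ell\mathcal{I}'$ and $\mu\sim_\ell\mu'$ — exactly the antecedent of noninterference. On the conclusion side, the two conjunctions ranging over $k=1..n$ become empty, hence vacuously true; in particular the antecedent $\bigwedge_{k=1..0}(\mu_{k_s}(e_k)=\mu_{k_s}'(e_k'))$ of the final implication holds trivially, so that implication contributes only its consequent $\mu_{n+1}\sim_\ell\mu_{n+1}'\wedge\mathcal{O}_{n+1}\sim_\ell\mathcal{O}_{n+1}'$.

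It then remains to read off the surviving obligation. Renaming $\mu_{n+1},\mathcal{O}_{n+1},p_{n+1},q_{n+1}$ to $\mu_f,\mathcal{O}_f,p_f,q_f$ (and their primed counterparts likewise), what is left is: for every $\ell\in\mathcal{D}$ and all $\mathcal{I},\mu,\mathcal{I}',\mu'$, whenever $P$ terminates from $(\mu,\mathcal{I})$ and $\mathcal{I}\sim_\ell\mathcal{I}'\wedge\mu\sim_\ell\mu'$, there is a terminating run from $(\mu',\mathcal{I}')$ with $\mu_f\sim_\ell\mu_f'$ and $\mathcal{O}_f\sim_\ell\mathcal{O}_f'$. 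Since we adopt $\ell_0=\bigsqcup\mathcal{D}$, the quantification $\forall\ell\in\mathcal{D}$ is identical to $\forall\ell\preceq\ell_0$, and the statement is exactly Definition~\ref{def:NI}; this establishes noninterference.

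The argument is essentially bookkeeping, so the only real care needed is in the treatment of the $\exists n\ge 0$ quantifier: I must argue that $n=0$ not only satisfies the definition formally but also faithfully describes the \emph{entire} terminating computation of $P$, so that no portion of the run is left unaccounted for by the collapsed trace. This is precisely what the first step guarantees, and I expect it to be the one place where the reasoning, though short, must be made explicit rather than waved through.
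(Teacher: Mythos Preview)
Your proposal is correct and follows essentially the same approach as the paper: observe that absence of $\mathpzc{declass}$ commands forces every run to be $\rightarrow_d$-free, so the where-security definition degenerates to the case $n=0$ and coincides with Definition~\ref{def:NI}. Your write-up is in fact more careful than the paper's own proof, which simply asserts that ``the where-security of $P$ degenerates to have $n=0$'' and identifies $\mu_f\equiv\mu_1,\ \mathcal{O}_f\equiv\mathcal{O}_1$ without spelling out the vacuity of the empty conjunctions or the matching of the quantifier ranges.
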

\begin{lemma}[Monotonicity of Release]\label{lemma:monotonicity}
If program $P[x:=e]$ is where-secure, then
$P[x:=\mathpzc{declass}(e)/x:=e]$ is where-secure.
\end{lemma}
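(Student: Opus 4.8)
The plan is to prove Monotonicity of Release by showing that replacing an ordinary assignment $x:=e$ with the declassifying assignment $x:=\mathpzc{declass}(e)$ can only \emph{weaken} the obligations that where-security imposes, so that any witness establishing where-security for $P[x:=e]$ directly yields one for $P[x:=\mathpzc{declass}(e)/x:=e]$. The key observation is that the two commands have identical effect on the memory store (both perform $\mu[x\mapsto v]$ with $\mu(e)=v$), so the underlying transition sequences of the two programs are in exact one-to-one correspondence: every computation of $P[x:=e]$ is mirrored by a computation of the substituted program with the same stores $\mu_{k_s},\mu_{k_t},\mu_{n+1}$, the same channels, and the same indices. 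The only difference is whether the step is labeled $\rightarrow$ (hence folded into a $\twoheadrightarrow$ segment) or $\rightarrow_d$ (hence counted among the $n$ declassification points).

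First I would fix $\ell\in\mathcal{D}$ and arbitrary $\mathcal{I},\mu,\mathcal{I}',\mu'$ and assume where-security of $P[x:=e]$. I would then distinguish two cases according to the semantics of $\mathpzc{declass}$ given in Fig.~\ref{fig:semantics}. In the first case $\sigma(e)\preceq\sigma(x)$, so by the first $\mathpzc{declass}$ rule the substituted command steps with $\rightarrow$ and is operationally identical to the plain assignment; the two programs are literally the same transition system, the value of $n$ and all intermediate configurations are unchanged, and where-security transfers verbatim. In the second case $\sigma(x)\prec\sigma(e)$, the command steps with $\rightarrow_d$, so the substituted run has one additional declassification point; say the original run has witness $n$ and the new run must use $n+1$. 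I would show that the single extra conjunct this introduces, namely $\mu_{k_s}\sim_\ell\mu_{k_s}'\wedge\mu_{k_s}(e)=\mu_{k_s}'(e')\Rightarrow\mu_{k_t}\sim_\ell\mu_{k_t}'$ at the new point, is automatically discharged: since $x:=e$ is an ordinary assignment, any two $\ell$-indistinguishable pre-states that additionally agree on the value of $e$ produce post-states that still agree on $x$ (and on every other low variable, which is untouched), so $\mu_{k_t}\sim_\ell\mu_{k_t}'$ holds whenever the antecedent does.

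The remaining work is to check that the large final conjunct still goes through with the enlarged index set. Here the premise of where-security for the \emph{new} program carries the \emph{stronger} hypothesis $\bigwedge_{k=1..n+1}(\mu_{k_s}(e_k)=\mu_{k_s}'(e_k'))$, which entails the weaker hypothesis $\bigwedge_{k=1..n}(\cdots)$ that suffices to invoke where-security of the original program; thus $\mu_{n+1}\sim_\ell\mu_{n+1}'$ and $\mathcal{O}_{n+1}\sim_\ell\mathcal{O}_{n+1}'$ are inherited. I expect the main obstacle to be bookkeeping rather than mathematical depth: one must verify that promoting a transition from $\rightarrow$ to $\rightarrow_d$ does not disturb the segmentation of the trace into $\twoheadrightarrow$ blocks elsewhere, and that the witnessing existential run on the primed side can be chosen as the image of the original primed run under the same correspondence. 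Care is also needed at the boundary between the two cases, since whether we land in case one or case two depends on the statically fixed relationship between $\sigma(x)$ and $\sigma(e)$, so exactly one case applies to the given occurrence and no genuine branching of the proof survives into the final argument.
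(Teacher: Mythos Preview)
Your proposal is correct and follows essentially the same route as the paper: the same case split on whether $\sigma(e)\preceq\sigma(x)$ or $\sigma(x)\prec\sigma(e)$, the observation that the two commands are semantically identical on stores so runs correspond one-to-one, the direct discharge of the extra local conjunct at the new $\rightarrow_d$ step from $\mu_{k_s}\sim_\ell\mu_{k_s}'\wedge\mu_{k_s}(e)=\mu_{k_s}'(e)$, and the final-state conclusion obtained because the new program's antecedent $\bigwedge_{k}(\mu_{k_s}(e_k)=\mu_{k_s}'(e_k'))$ is stronger than the original one. The only cosmetic difference is indexing (you write $n\mapsto n+1$ whereas the paper keeps $n$ for the new program and drops index $j$ for the original), which is immaterial.
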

\begin{corollary}\label{coro}
The where-security satisfies semantic consistency, conservativity,
and monotonicity of release.
\end{corollary}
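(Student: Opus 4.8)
The plan is to observe that the corollary is nothing more than the conjunction of the three preceding lemmas, so there is no independent content to establish: once Lemmas~\ref{lemma:consist}, \ref{lemma:conservativity}, and \ref{lemma:monotonicity} are in hand, the corollary follows by direct instantiation, with semantic consistency supplied by Lemma~\ref{lemma:consist}, conservativity by Lemma~\ref{lemma:conservativity}, and monotonicity of release by Lemma~\ref{lemma:monotonicity}. The only thing to verify is that each lemma statement coincides with the corresponding formal prudent principle of Sabelfeld and Sands. Consequently I would phrase the proof of the corollary as a single appeal to the three lemmas, and devote the real effort to sketching why each lemma holds.

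For Lemma~\ref{lemma:conservativity} I would argue that when $P$ contains no $\mathpzc{declass}$ command, every trace of $P$ uses only $\rightarrow$ steps and never a $\rightarrow_d$ transition. Hence the only instance of the where-security definition that can be witnessed has $n=0$: the bracketed declassification segments disappear, both conjunctions $\bigwedge_{k=1..n}$ become empty and therefore vacuously true, and the surviving obligation collapses exactly to the implication of Definition~\ref{def:NI} relating $\ell$-indistinguishable initial stores and inputs to $\ell$-indistinguishable final stores and outputs. Thus where-security specializes verbatim to noninterference.

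For Lemma~\ref{lemma:monotonicity} I would case-split on whether $x:=\mathpzc{declass}(e)$ is a genuine downgrade. If $\sigma(e)\preceq\sigma(x)$, the semantics treat it as an ordinary assignment, so $P[x:=\mathpzc{declass}(e)/x:=e]$ has precisely the traces of $P[x:=e]$ and security is unchanged. If $\sigma(x)\prec\sigma(e)$, the substitution turns one $\rightarrow$ step into a $\rightarrow_d$ step, increasing $n$ by one and inserting the extra equality hypothesis $\mu_{k_s}(e_k)=\mu_{k_s}'(e_k')$ into the premise of the final implication. Since adding a conjunct only strengthens the antecedent, any run secure before the substitution remains secure after it.

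The main obstacle is Lemma~\ref{lemma:consist}. Here I would exploit that $C$ and $C'$ are declassification-free and semantically equivalent on the shared configuration domain, so replacing $C$ by $C'$ induces a step-wise correspondence between the traces of $P[C]$ and $P[C']$ that agrees on the memory store and I/O channels at every declassification boundary and at termination. The delicate point is to show that this correspondence respects the segmentation of a trace into its $\rightarrow_d$ transitions and the intervening $\twoheadrightarrow$ sequences, so that the declassification commands, their expressions $e_k$, their source and target stores $\mu_{k_s},\mu_{k_t}$, and the final pair $\mu_{n+1},\mathcal{O}_{n+1}$ are all matched across the two programs and every $\sim_\ell$ and every expression-equality hypothesis transfers unchanged. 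The fact that $C$ contributes no $\rightarrow_d$ boundary is exactly what lets the number $n$ of declassification steps and the entire premise–conclusion structure of the where-security obligation survive the substitution intact.
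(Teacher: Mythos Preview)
Your proposal is correct and matches the paper's approach: the paper gives no independent proof of the corollary beyond remarking that it follows from the three lemmas, and your sketches of Lemmas~\ref{lemma:consist}, \ref{lemma:conservativity}, and \ref{lemma:monotonicity} line up with the proofs in the appendix (the $n=0$ collapse for conservativity, the $\sigma(e)\preceq\sigma(x)$ versus $\sigma(x)\prec\sigma(e)$ case split for monotonicity, and the trace-correspondence argument for semantic consistency). If anything, your treatment of Lemma~\ref{lemma:consist} is more careful about the segmentation into $\rightarrow_d$ steps than the paper's own sketch.
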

This corollary indicates that the where-security complies with the
three prudent principles given by the above lemmas. The proofs of
the lemmas are presented in \cite{thispaper_tr}. The non-occlusion
principle cannot be formally proved since a proof would require a
characterization of secure information flow which is what we want to
check against the prudent principles.

\section{Enforcement}\label{sec:enforcement}

In this section, we provide a new enforcement for the where-security
based on reachability analysis of symbolic pushdown
system\cite{schwoonphd}. A pushdown system is a stack-based state
transition system whose stack contained in each state can be
unbounded. It is a natural model of sequential program with
procedures. Symbolic pushdown system is a compact representation of
pushdown system encoding the variables and computations
symbolically.
%The definition is briefly given below.
\begin{definition}[Symbolic Pushdown System, SPDS]
Symbolic Pushdown System is a triple
$\mathcal{P}=(\mathcal{G},\Gamma\times\mathcal{L},\Delta)$.
$\mathcal{G}$ and $\mathcal{L}$ are respectively the domain of
global variables and local variables. $\Gamma$ is the stack
alphabet. $\Delta$ is the set of symbolic pushdown rules
$\{\langle\gamma\rangle\hookrightarrow\langle\gamma_1\cdots\gamma_n\rangle
(\mathcal{R})\mid\gamma,\gamma_1,\cdots,\gamma_n\in\Gamma\wedge
\mathcal{R}\subseteq(\mathcal{G}\times\mathcal{L})\times(\mathcal{G}\times\mathcal{L}^n)\wedge
n\leq 2\}$.
\end{definition}
The stack symbols denote the flow graph nodes of program. The
relation $\mathcal{R}$ specifies the variation of abstract variables
before and after a single step of symbolic execution directed by the
pushdown rules. The operations on $\mathcal{R}$ are compactedly
implemented with binary decision diagrams
(BDDs)\cite{DBLP:journals/tc/Bryant86} in Moped\cite{moped} which we
use as the back-end verification engine.

\begin{table*}[!t]
\renewcommand{\arraystretch}{1.3}
\caption{PDS Rules for Model Construction}\label{table:pds_rules}
\centering
\begin{tabular}{|c|c|}\hline
IR$_H$ &
$\langle\gamma_j\rangle\hookrightarrow\langle\gamma_k\rangle\
(x'=\bot)\wedge
rt(\mu\setminus\{x\},\mathcal{I}^\ell,\mathcal{O}^\ell,p^\ell,q^\ell,\cdots)$\\\hline

IR$_L$ &
$\langle\gamma_j\rangle\hookrightarrow\langle\gamma_k\rangle\
(x'=\mathcal{I}_i[p_i])\wedge(p_i'=p_i+1)\wedge
rt(\mu\setminus\{x\},\mathcal{I}^\ell,\mathcal{O}^\ell,p^\ell\setminus\{p_i\},q^\ell,\cdots)$\\\hline

OR$_H$ &
$\langle\gamma_j\rangle\hookrightarrow\langle\gamma_k\rangle\
rt(\mu,\mathcal{I}^\ell,\mathcal{O}^\ell,p^\ell,q^\ell,\cdots)$
\\\hline

OR$_L$ & $\langle\gamma_j\rangle\hookrightarrow\langle
\textit{output}_{\text{entry}} \gamma_k\rangle\
(\textit{tmp}'=e)\wedge
rt(\mu,\mathcal{I}^\ell,\mathcal{O}^\ell,p^\ell,q^\ell,\cdots)\wedge
rt_2(\cdots)$ \\

& $\langle
\textit{output}_{\text{exit}}\rangle\hookrightarrow\langle\epsilon\rangle\
rt(\mu,\mathcal{I}^\ell,\mathcal{O}^\ell,p^\ell,q^\ell,\cdots)$\\\hline

DR & $ \langle\gamma_j\rangle\hookrightarrow\langle
\mathpzc{declass}_{\text{entry}}^{\gamma_j}\rangle\
(\textit{tmp}'=e)\wedge
rt(\mu,\mathcal{I}^\ell,\mathcal{O}^\ell,p^\ell,q^\ell,\cdots)$\\

& $\langle
\mathpzc{declass}_{\text{exit}}^{\gamma_j}\rangle\hookrightarrow\langle\gamma_k\rangle\
rt(\mu,\mathcal{I}^\ell,\mathcal{O}^\ell,p^\ell,q^\ell,\cdots)$\\\hline
\end{tabular}
\end{table*}

\begin{table*}[!h]
\renewcommand{\arraystretch}{1.3}
\caption{Stuffer PDS Rules for Model
Transformation}\label{table:pds_rules2} \centering
\begin{tabular}{|c|c|}\hline
RST &
$\langle\gamma_j\rangle\hookrightarrow\langle\xi(\gamma_0)\rangle\
(\forall p_i\in p^\ell.p_i'=0)\wedge (\forall q_i\in q^\ell.q_i'=0)
\wedge
rt(\mu,\xi(\mu),\mathcal{I}^\ell,\mathcal{O}^\ell,\cdots)$\\\hline

OS$_i$ & $\langle
\textit{output}_{\text{entry}}\rangle\hookrightarrow\langle
\textit{output}_{\text{exit}}\rangle\
(\mathcal{O}_i'[q_i]=\textit{tmp})\wedge(q_i'=q_i+1)\wedge
rt(\mu,\xi(\mu),\mathcal{I}^\ell,\mathcal{O}^\ell\setminus\{\mathcal{O}_i[q_i]\},p^\ell,q^\ell\setminus\{q_i\},\cdots)$\\\hline

OM$_i$ & $
\langle\xi(\textit{output}_{\text{entry}})\rangle\hookrightarrow\langle\textit{error}\rangle\
(\mathcal{O}_i[q_i]\neq \textit{tmp})\wedge rt(\cdots)$\\

& $\langle
\xi(\textit{output}_{\text{entry}})\rangle\hookrightarrow\langle
\xi(\textit{output}_{\text{exit}})\rangle\
(\mathcal{O}_i[q_i]=\textit{tmp})\wedge (q_i'=q_i+1)\wedge
rt(\mu,\xi(\mu),\mathcal{I}^\ell,\mathcal{O}^\ell,p^\ell,q^\ell\setminus\{q_i\},\cdots)$\\\hline

DS$_{\gamma_j}$ & $\langle
\mathpzc{declass}_{\text{entry}}^{\gamma_j}\rangle\hookrightarrow\langle
\mathpzc{declass}_{\text{exit}}^{\gamma_j}\rangle\
(\mathcal{D}'[\rho(\gamma_j)]=\textit{tmp})\wedge
(x'=\textit{tmp})\wedge
rt(\mathcal{D}\setminus\{\mathcal{D}[\rho(\gamma_j)]\},\mu\setminus\{x\},\xi(\mu),\cdots)$\\\hline

DM$_{\gamma_j}$ & $
\langle\xi(\mathpzc{declass}_{\text{entry}}^{\gamma_j})\rangle\hookrightarrow\langle\textit{idle}\rangle\
(\mathcal{D}[\rho(\gamma_j)]\neq \textit{tmp})\wedge
rt(\mathcal{D},\cdots)$\\

&
$\langle\xi(\mathpzc{declass}_{\text{entry}}^{\gamma_j})\rangle\hookrightarrow\langle\xi(\mathpzc{declass}_{\text{exit}}^{\gamma_j})\rangle\
(\mathcal{D}[\rho(\gamma_j)]=\textit{tmp})\wedge
(\xi(x)'=\textit{tmp})\wedge
rt(\mathcal{D},\mu,\xi(\mu)\setminus\{\xi(x)\},\cdots)$\\\hline
\end{tabular}
\end{table*}

The model construction of commands other than I/O operations is
similar to the one in our previous work\cite{suncong1}. In the pushdown
system, the public channels are represented by global linear lists.
In another word, for a security domain $\ell\in\mathcal{D}$, we only
model the channels in $\mathcal{I}^\ell$ and $\mathcal{O}^\ell$. Take a
input command for example, if the source channel is $\mathcal{I}_i$,
the pushdown rule has a form of IR$_H$ for
$\sigma(\mathcal{I}_i)\succ\ell$ and IR$_L$ for
$\sigma(\mathcal{I}_i)\preceq\ell$ in Table \ref{table:pds_rules},
where $\bot$ denotes an indefinite value. On the other hand, if the
target channel of output is $\mathcal{O}_i$, the pushdown rule has a
form of OR$_H$ for $\sigma(\mathcal{O}_i)\succ\ell$ and OR$_L$ for
$\sigma(\mathcal{O}_i)\preceq\ell$ in Table \ref{table:pds_rules}.
OR$_H$ is just like a transition of \textbf{skip} since the
confidential outputs do not influence the public part of subsequent
states. The variable \textit{tmp} stores the value of expression to
be outputted or declassified. $rt$ means retainment on value of
global variables and on value of local variables in
$\langle\gamma_j\rangle\hookrightarrow\langle\gamma_k\rangle$.
$rt_2$ for a rule
$\langle\gamma_j\rangle\hookrightarrow\langle\textsf{f}_{\text{entry}}\gamma_k\rangle$
denotes retainment on value of local variables of the caller of
procedure \textsf{f}. The declassifications are modeled with DR in
Table \ref{table:pds_rules}. The bodies of outputs to different
public channel and the bodies of declassifications are vacuous.
These absent parts of model will be filled by the self-composition.
This treatment is decided by the store-match pattern which we
develop to avoid the duplication of public channels and to guide the
instrumented computation to fulfil the premise of where-security
property.

We follow the principle of reachability analysis for noninterference
which we proposed in \cite{suncong2}. The self-composition is
evolved into three phases: basic self-composition, auxiliary initial
interleaving assignments, and illegal-flow state construction. For
simplicity, we use the \emph{compact
self-composition}\cite{suncong1} as basic self-composition. To avoid
duplicating the input channels, we reuse the content of public input
channels by resetting the indices of $p^\ell$ to 0 at the beginning
of the pairing part of model, see RST in Table
\ref{table:pds_rules2}. This treatment is safe because from the
semantics we know that no computation actually modifies the content
of input channels. In order to avoid duplicating the output
channels, we propose a store-match pattern of output actions. This
is to stuff the model after basic self-composition with the pushdown
rules OS and OM in Table \ref{table:pds_rules2} parameterized with
the channel identifier $i$. The OM rules show that when the output
to channel $\mathcal{O}_i$ is computed in the second run, it is
compared with the corresponding output stored during the first run.
If they are not equal, the symbolic execution is directed to the
illegal-flow state \textit{error}.

%Because each store of the I/O channel is modeled explicitly, the
%state space of abstract model will be largely increased when we
%duplicate the I/O channels with self-composition.
%The framework eases the manual assertion encoding partial correctness judgement.
%To overcome this limitation,

\renewcommand{\algorithmiccomment}[1]{// #1}
\algsetup{linenosize=\scriptsize, linenodelimiter=. }
\algsetup{indent=0.45em}

\begin{algorithm}[!t]\scriptsize
\caption{Model Transformation}\label{algo}
\begin{algorithmic}[1]
\STATE
$\Delta'\leftarrow\{\langle\gamma_{\text{init}}\rangle\hookrightarrow\langle
\textit{startConf}(\mathcal{P})\rangle\ (\forall
x\in\mbox{dom}(\mu^\ell).\xi(x)'=x)\wedge
rt(\mu,\mathcal{I}^\ell,\mathcal{O}^\ell,p^\ell,q^\ell)\}$

\FORALL{$r\in\Delta\wedge r\neq \textit{LastTrans}(\mathcal{P})$}

\STATE $\Delta'\leftarrow\Delta'\cup\{r.expr\ r.\mathcal{R}\wedge
rt(\xi(\mu))\}$

\ENDFOR

\FORALL{$r\in\Delta$}

\IF{$r.expr=\langle\gamma_j\rangle\hookrightarrow\langle\gamma_s\gamma_k\rangle$}

\STATE
$\Delta'\leftarrow\Delta'\cup\{\langle\xi(\gamma_j)\rangle\hookrightarrow\langle\xi(\gamma_s)\xi(\gamma_k)\rangle\
r.\mathcal{R}_{x\in\textit{Var}}^{\xi(x)}\wedge rt(\mu)\}$

\ELSIF{$r.expr=\langle\gamma_j\rangle\hookrightarrow\langle
\mathpzc{declass}_{\text{entry}}^{\gamma_j}\rangle$}

\STATE
$\Delta'\leftarrow\Delta'\cup\{\langle\xi(\gamma_j)\rangle\hookrightarrow\langle\xi(\mathpzc{declass}_{\text{entry}}^{\gamma_j})\rangle\
r.\mathcal{R}_{x\in\textit{Var}}^{\xi(x)}\wedge rt(\mu)\}\cup
\text{DS}_{\gamma_j}\cup \text{DM}_{\gamma_j}$

\ELSIF{$r.expr=\langle\gamma_j\rangle\hookrightarrow\langle\gamma_k\rangle$}

\STATE
$\Delta'\leftarrow\Delta'\cup\{\langle\xi(\gamma_j)\rangle\hookrightarrow\langle\xi(\gamma_k)\rangle\
r.\mathcal{R}_{x\in\textit{Var}}^{\xi(x)}\wedge rt(\mu)\}$

\ELSIF{$r\neq \textit{LastTrans}(\mathcal{P})$}

\STATE
$\Delta'\leftarrow\Delta'\cup\{\langle\xi(\gamma_j)\rangle\hookrightarrow\langle\epsilon\rangle\
r.\mathcal{R}_{x\in\textit{Var}}^{\xi(x)}\wedge rt(\mu)\}$

\ELSE

\STATE
$\Delta'\leftarrow\Delta'\cup\{\langle\xi(\gamma_j)\rangle\hookrightarrow\langle\xi(\gamma_j)\rangle\
r.\mathcal{R}_{x\in\textit{Var}}^{\xi(x)}\wedge
rt(\mu)\}\cup\{\langle\gamma_j\rangle\hookrightarrow\langle\xi(\textit{startConf}(\mathcal{P}))\rangle\
\text{RST}\}$

\ENDIF

\ENDFOR

\STATE
$\Delta'\leftarrow\Delta'\cup\bigcup_{\mathcal{O}_i\in\mathcal{O}^\ell}(\text{OS}_i\cup
\text{OM}_i)$
\end{algorithmic}
\end{algorithm}

Compared with the noninterference property, the premise of
where-security contains equality relations on the declassified
expressions, therefore we need some structure to instrument the
semantics of abstract model to make sure the computation can proceed
only when the equality relations are satisfied. We define another
global linear list $\mathcal{D}$. Suppose there are $m$
declassifications respectively at code location $\gamma_{d_i}(0\leq
i<m)$ and a function $\rho$ mapping $\gamma_{d_i}$ to $i$. We give
another pattern of store-match that stores the value of expression
declassified at $\gamma_{d_i}$ to the site
$\mathcal{D}[\rho(\gamma_{d_i})]$, see DS in Table
\ref{table:pds_rules2}. The corresponding match operation has a form
of DM in Table \ref{table:pds_rules2}. Note that $\xi$ is the rename
function on the stack symbols to generate new flow graph nodes as
well as on the variables to generate the companion variables for the
pairing part of model. The state \textit{idle} has only itself as
the next state. From the reachability of \textit{error} we can
ensure the violation of where-security without considering the
equality relations on the subsequent outputs. The self-composition
algorithm is given in Algorithm~\ref{algo}. The \textit{LastTrans}
returns the pushdown rule with respect to the last return command of
program. The first rule added to $\Delta'$ denotes the initial
interleaving assignments from public variables to their companion
variables. $r.\mathcal{R}_{x\in\textit{Var}}^{\xi(x)}$ means a
relation substituting each variable in \textit{Var} with the renamed
companion variable.

%
%We can ensure the where-security of program when the \texttt{error}
%state is unreachable with a automatic reachability analysis on the
%transformation result.

%for each declassification at $\gamma_{d_i}$ using $\mathcal{D}[\rho(\gamma_{d_i})]$ as the site

%Because compared with noninterference, the premise of where-security
%contains equality relation on the declassified expressions, we need
%some structure of model to simulate this relation.

%\begin{figure}[!t]
%\centering
%\includegraphics[width=2.5in]{myfigure}
%\caption{Simulation Results}\label{fig_sim}
%\end{figure}

\begin{theorem}[Correctness]\label{theorem:correctness}
Let $SC(\mathcal{P}^\ell)$ be the pushdown system w.r.t. security
domain $\ell$ generated by our self-composition on the model of
program $P$. If $\forall\ell\in\mathcal{D}$, the state \emph{error}
of $SC(\mathcal{P}^\ell)$ is unreachable from any initial state, we
have $P$ satisfies the where-security.
\end{theorem}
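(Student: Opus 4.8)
The plan is to prove the contrapositive: assuming $P$ is \emph{not} where-secure, I exhibit a computation of $SC(\mathcal{P}^\ell)$ that reaches \emph{error}. Fix an arbitrary $\ell\in\mathcal{D}$; this discharges the outer universal quantifier in both the property and the family of models. The heart of the argument is a correspondence (a simulation invariant) between the reachable configurations of $SC(\mathcal{P}^\ell)$ and pairs of runs of $P$ under the operational semantics of Fig.\ref{fig:semantics}, in which the unprimed part of a model configuration tracks a first run $(\mu,\mathcal{I},\mathcal{O},p,q,P)\rightarrow^*\cdots$ and the $\xi$-renamed part tracks a second run started from $(\mu',\mathcal{I}',\mathcal{O}',p',q',P)$.

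First I would fix this correspondence precisely and prove it is preserved by every pushdown rule, by case analysis on the rule applied against the matching operational rule. The base cases are the rules that re-use the model construction of \cite{suncong1} for assignments, branches, loops and sequencing; here I rely on the established soundness of that construction, together with the observations that only the public channels $\mathcal{I}^\ell,\mathcal{O}^\ell$ are modeled and that a confidential output (rule OR$_H$) behaves as \textbf{skip} and so cannot perturb the $\ell$-observable projection of a state. The transition from the first copy to the second is the RST rule: I argue it is sound to reset every index in $p^\ell$ and $q^\ell$ to $0$ and reuse the stored public inputs, because no operational rule ever mutates the content of an input channel, so the second run sees exactly the $\mathcal{I}\sim_\ell\mathcal{I}'$ inputs required by the premise.

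Next I would show that the store-match rules encode the quantified conditions in the definition of where-security. For declassification, rule DS records $\mu_{k_s}(e_k)$ at $\mathcal{D}[\rho(\gamma_{d_k})]$ during the first run, and rule DM compares $\mu_{k_s}'(e_k')$ against it during the second run: a match lets the second copy proceed (modeling the hypothesis $\mu_{k_s}(e_k)=\mu_{k_s}'(e_k')$), while a mismatch diverts control to \emph{idle}, whose only successor is itself. Since \emph{idle} can never reach \emph{error}, this exactly realizes the guard $\bigwedge_{k}\mu_{k_s}(e_k)=\mu_{k_s}'(e_k')$ of the conclusion: runs on which some declassified value differs are cut off and impose no constraint, as the property demands. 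For outputs, rule OS stores the first run's public outputs in order and rule OM compares the second run's outputs position-by-position, diverting to \emph{error} on the first disagreement. Thus a computation reaches \emph{error} iff the two runs agree on all declassified values yet produce differing $\ell$-observable output, which is precisely a violation of $\mathcal{O}_{n+1}\sim_\ell\mathcal{O}_{n+1}'$ (the conclusion $\mu_{n+1}\sim_\ell\mu_{n+1}'$ is reduced to this form through explicit final outputs, as noted after the definition). Assembling these facts, a witness of non-where-security yields a pair of runs whose simulating computation passes every DM match and triggers an OM mismatch, hence reaches \emph{error}, contradicting the hypothesis.

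The main obstacle will be establishing that the two runs stay aligned, so that the positional matching performed by DM and OM indeed compares corresponding declassifications and outputs, and in particular that the second run passes through the same number $n$ of declassifications, at the same locations, in the same order as the first. I expect to handle this with a lock-step argument over the declassification-free segments $\twoheadrightarrow$: given $\ell$-equal stores and equal declassified values so far, each segment preserves $\ell$-equality of the stores and hence of every branch guard that can influence the $\ell$-observable trace, keeping the control flow synchronized until the first point where either a declassified value differs (caught by DM, sending the run to \emph{idle}) or a public output differs (caught by OM, reaching \emph{error}). Making this synchronization rigorous in the presence of the unbounded stack and of declassifications occurring inside loops, where $\mathcal{D}[\rho(\gamma_{d_k})]$ is overwritten on each iteration, is the delicate part, and it is where the soundness of abstracting I/O channels as global lists and the correctness of the reachability engine must be invoked most carefully.
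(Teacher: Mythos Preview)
Your contrapositive set-up, the treatment of RST and of OS/OM, and the reduction of $\mu_{n+1}\sim_\ell\mu_{n+1}'$ to an output check all match the paper's proof. The substantive gap is that you only discharge the \emph{second} conjunct of the where-security conclusion, namely $\bigwedge_k(\mu_{k_s}(e_k)=\mu_{k_s}'(e_k'))\Rightarrow \mu_{n+1}\sim_\ell\mu_{n+1}'\wedge\mathcal{O}_{n+1}\sim_\ell\mathcal{O}_{n+1}'$. The definition has a separate per-step conjunct $\bigwedge_k(\mu_{k_s}\sim_\ell\mu_{k_s}'\wedge\mu_{k_s}(e_k)=\mu_{k_s}'(e_k')\Rightarrow\mu_{k_t}\sim_\ell\mu_{k_t}')$, and the paper's proof opens with an explicit case split on which of these two conjuncts is violated. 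Your proposal never names the first disjunct of the negation.

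The place this bites is exactly your ``main obstacle''. You plan to show the two runs stay aligned so that DM at the $k$th declassification compares $\mu_{k_s}'(e_k')$ against the stored $\mu_{k_s}(e_k)$. But the definition allows $x_k\neq x_k'$ and $e_k\neq e_k'$: the second run may reach a \emph{different} declassification site, e.g.\ when a high guard selects between two $\mathpzc{declass}$ commands, so your lock-step invariant (``$\ell$-equality of the stores and hence of every branch guard'') does not hold for high branches. The paper does not try to prove alignment here; instead it observes that if the first conjunct fails at step $k_0$ with $\mu_{k_0,s}\sim_\ell\mu_{k_0,s}'$ and matching declassified values, then necessarily $x_k\neq x_k'$, hence the two sites have distinct labels $\gamma_k\neq\gamma_{k'}$. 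Because DS/DM are indexed by $\rho(\gamma_j)$, the second run's DM tests $\mathcal{D}[\rho(\gamma_{k'})]$, a slot unrelated to what the first run stored at $\gamma_k$; the second run is therefore \emph{not} forced to \textit{idle}, and the resulting $\ell$-visible disagreement on $x_k$ is later exposed by OM at the final outputs. In short, the per-location indexing of $\mathcal{D}$ is what replaces your alignment argument in this case, and you should invoke it rather than attempt a synchronization lemma that is not true in general.
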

(The proof is sketched in the technical report\cite{thispaper_tr})

\section{Evaluation}\label{sec:evaluation}

We implement Algorithm~\ref{algo} as part of the parser of
Remopla\cite{remoplaintro} and use Moped as the black-box back-end
engine for the reachability analysis. Here we use experiments to
evaluate:
\begin{compactenum}[1.]
\item whether the property defined by where-security is similar to the existing properties on the
where-dimension,
e.g.\cite{DBLP:conf/esop/MantelR07,DBLP:conf/sp/AskarovS07}, and
what is the real difference between these properties.

\item the preciseness of the mechanism compared with the type
systems on enforcing the respective security properties.

\item whether the store-match pattern can really reduce the state space as
well as the cost of verification.
\end{compactenum}
The experiments are performed on a laptop with 1.66GHz Intel Core 2
CPU, 1GB RAM and Linux kernel 2.6.27-14-generic. The test cases are
chosen from related works, see Table \ref{table:comparison}.

Firstly, we illustrate that where-security is more relaxed than
WHERE\cite{DBLP:conf/esop/MantelR07,DBLP:conf/ifip1-7/LuxM08} and
gradual release\cite{DBLP:conf/sp/AskarovS07}. Lux and
Mantel\cite{DBLP:conf/ifip1-7/LuxM08} have proposed another two
prudent principles: \emph{noninterference up-to} and
\emph{persistence}. Compared with the four basic principles, the two
principles are not generally used for policies on different
dimensions. The conformances of the properties with these principles
are given in Table \ref{table:diff}. Similar to the gradual release,
the program P1 in Table \ref{table:comparison} is secure (denoted by
\checkmark) w.r.t. where-security. This indicates the two properties
do not comply with persistence since the reachable command $l:=h$ is
obviously not secure. On the contrary, WHERE rejects this program.
Our where-security does not comply with noninterference up-to
because the definition deduces relations on final states but not on
the states before $\mathpzc{declass}$ primitives. A typical example
is P0. It is where-secure but judged insecure by WHERE and gradual
release. Although different on these special cases, the
where-security can characterize a similar property to WHERE and
gradual release for the most cases in Table \ref{table:comparison},
see the column \emph{WHERE}, \emph{GR} and \emph{where}.
% in these tables.

%Meanwhile, compared with the four basic principles, the generality
%of these two principles for policies on different dimensions still
%needs discussion.

%We try to use the experiments to evaluate: 1. whether our
%where-security can be used as a similar security property to the
%existing properties on the where-dimension,
%e.g.\cite{DBLP:conf/esop/MantelR07}\cite{DBLP:conf/sp/AskarovS07},
%and 2. the preciseness of the mechanism compared with the type systems
%on enforcing the respective security property.

\begin{table}[!t]
\renewcommand{\arraystretch}{1.3}
\caption{Difference between Properties}\label{table:diff} \centering
\begin{tabular}{|l|c|c|c|}\hline
 & \emph{WHERE} & \emph{gradual release} & \emph{where}\\\hline

noninterference up-to & \checkmark & \checkmark & $\times$\\\hline

persistence & \checkmark & $\times$ & $\times$\\\hline
\end{tabular}
\end{table}

%A typical example is like $l:=h;l:=\mathpzc{declass}(h);l:=0$. It
%satisfies where-security but does not comply with noninterference up-to.

\begin{table*}[!t]
\renewcommand{\arraystretch}{1.3}
\caption{Property and Enforcement Comparison with WHERE and Gradual
Release}\label{table:comparison} \centering
\begin{minipage}[lt]{11.2cm}
\begin{tabular}[!l]{|c|c|c|c|c|c|c|c|c|c|}\hline
Case & From & \emph{WHERE} & $\tau_1$ & \emph{GR} & $\tau_2$ &
\emph{where} & RA & T(ms) & N$_{min}$\\\hline

Ex2 & Example 2,\cite{DBLP:conf/aplas/MantelS04} & $\times$ &
$\times$ & $\times$ & $\times$ & $\times$ & $\times$ & 39.2 &
2\\\hline

RSA & Example 5,\cite{DBLP:conf/aplas/MantelS04} & $\times$ &
$\times$ & $\times$ & $\times$ & $\times$ & $\times$ & 1.09 & 1
\\\hline

C1 & Example 1,\cite{DBLP:conf/esop/MantelR07} & $\times$ & $\times$
& $\times$ & $\times$ & $\times$ & $\times$ & 0.55 & 1 \\\hline

C2 & Example 1,\cite{DBLP:conf/esop/MantelR07} & \checkmark &
\checkmark & \checkmark & \checkmark & \checkmark & \checkmark &
0.59 & --\\\hline

C3 & Example 1,\cite{DBLP:conf/esop/MantelR07} & \checkmark &
\checkmark & \checkmark & \checkmark & \checkmark & \checkmark &
0.49 & --\\\hline

filter & Fig.6,\cite{DBLP:conf/esop/MantelR07} & \checkmark &
\checkmark & \checkmark & \checkmark & \checkmark & \checkmark &
5.47 & --
\\\hline

P0 & Sec.1,\cite{DBLP:conf/pldi/AskarovS07} & $\times$ & $\times$ &
$\times$ & $\times$ & \checkmark & \checkmark & 0.44 & --\\\hline

P1 & Sec.2,\cite{DBLP:conf/sp/AskarovS07} & $\times$ & $\times$ &
\checkmark & $\times$ & \checkmark & \checkmark & 0.53 & --\\\hline

P2 & Sec.3,\cite{DBLP:conf/pldi/AskarovS07} & \checkmark & $\times$
& \checkmark & $\times$ & \checkmark & \checkmark & 0.64 &
--\\\hline

P3 & Sec.2,\cite{DBLP:conf/sp/AskarovS07} & $\times$ & $\times$ &
$\times$ & $\times$ & $\times$ & $\times$ & 3.53 & 1\\\hline

P4 & Sec.4,\cite{DBLP:conf/pldi/AskarovS07} & $\times$ & $\times$ &
$\times$ & $\times$ & $\times$ & $\times$ & 2.03 & 1\\\hline

P5 & Sec.4,\cite{DBLP:conf/pldi/AskarovS07} & $\times$ & $\times$ &
$\times$ & $\times$ & $\times$ & $\times$ & 0.61 & 1\\\hline

P6 & Sec.5,\cite{DBLP:conf/pldi/AskarovS07} & \checkmark & $\times$
& \checkmark & $\times$ & \checkmark & \checkmark & 0.37 &
--\\\hline

P7 & Sec.2,\cite{DBLP:conf/sp/AskarovS07} & \checkmark & $\times$ &
\checkmark & \checkmark & \checkmark & \checkmark & 1.91 &
--\\\hline
\end{tabular}
\end{minipage}
\begin{minipage}[rt]{6cm}
\begin{tabular}{c l}\hline
%\lstset{basicstyle=\scriptsize,language=C}
%\begin{lstlisting}
P0 & $l:=h;l:=\mathpzc{declass}(h);$\\
P1 & $l:=\mathpzc{declass}(h);l:=h;$\\
P2 & $h_1:=h_2;l:=\mathpzc{declass}(h_1);$\\
P3 & $h_1:=h_2;h_2:=0;$\\
   & $l_1:=\mathpzc{declass}(h_2);h_2:=h_1;l_2:=h_2;$\\
P4 & $h_2:=0;$\\
   & $\text{if }h_1\text{ then }l:=\mathpzc{declass}(h_1)$\\
   & $\text{else }l:=\mathpzc{declass}(h_2);$\\
P5 & $l:=0;$\\
   & $\text{if }l\text{ then }l:=\mathpzc{declass}(h)\text{ else skip};$\\
   & $l:=h;$\\
P6 & $h_2:=0;$\\
   & $\text{if }h_1\text{ then }l:=\mathpzc{declass}(h_2)\text{ else }l:=0;$\\
P7 & $l:=\mathpzc{declass}(h!=0);$\\
   & $\text{if }l\text{ then }l_1:=\mathpzc{declass}(h_1)\text{ else skip};$\\\hline
%\end{lstlisting}
\end{tabular}
\end{minipage}
\end{table*}

%The column WHERE in Table \ref{table:comparison_WHERE} and GR in Table \ref{table:comparison_GR} respectively
%mean the security of program w.r.t. the property \emph{WHERE} and \emph{gradual release}.
%In these columns \checkmark means secure and $\times$ means not secure.

Then we evaluate the preciseness of our enforcement mechanism. In
Table \ref{table:comparison} $\tau_1$ is the well-typeness of
program judged by the type system in
Fig.4,\cite{DBLP:conf/esop/MantelR07}. $\tau_2$ is the judgement of
the type system given in Fig.3,\cite{DBLP:conf/sp/AskarovS07}. RA is
the reachability analysis result using our mechanism. \checkmark means the state
\textit{error} is not reachable. The analysis time T is related to
the number of bits of each variable, which we set to 3 and that
means each variable in the model has a range of 0$\sim$2$^3$-1.
Larger number of bits corresponds to the increase on state space of
model and the analysis time. On the other hand, the number of bits
of variable is meaningful also because if it is too small for the
model of insecure program, the illegal path cannot be caught. This
causes a false-positive which can be avoided by setting the number
of bits of variable sufficiently large. We record the minimum number
of bits to avoid false-positive as N$_{min}$. The analysis might be
time consuming when N$_{min}$ is large. For secure program, the
illegal-flow state will be unreachable for any number of bits
therefore N$_{min}$ is not recorded. The program \emph{filter} in
Table \ref{table:comparison} has a more complex policy. From the
escape hatch information we have
$\emph{reader}\preceq\emph{network}$. The model is constructed and
transformed on respective security domains. On each security domain
different public variables are modeled outputted in the end and
state \textit{error} of transformed model is unreachable. Our
enforcement is more precise compared with the type systems that
reject some secure programs (P2,P6,P7 for WHERE and P1,P2,P6 for
gradual release).

%To avoid the false-positives, we should set the number of bits of
%variable to exceed N$_{min}$, and when N$_{min}$ is large, our
%analysis might be time consuming.

%the declassification action is abstracted as legal leakage only when
%the security domain $\ell$ in Algorithm~\ref{algo} is
%$\textit{filter}$. Therefore the model is transformed on this
%security domain.

%From these case studies we can learn that the where-security
%characterizes a similar property to the properties defined in
%related work. Actually it is more relaxed than WHERE for some case,
%e.g. the one in the last of Sec.\ref{sec:property}. Similar to the
%gradual release, the program P1 in Table \ref{table:comparison_GR}
%is considered secure w.r.t. our property. But WHERE will reject this program.

\begin{figure}[!t]
\centering
\includegraphics[scale=.38,bb=80 295 580 530]{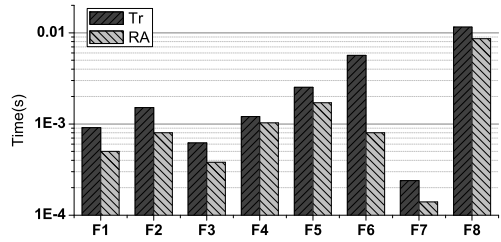}
\caption{Cost Reduction with Store-Match Pattern}\label{fig:sto_mat}
\end{figure}

Finally, we evaluate the reduction on the cost of
verification provided by the store-match pattern. We compare our
mechanism with a model transformation, i.e. \textit{Tr} in
Fig.\ref{fig:sto_mat}, which duplicates the public output channels
and constructs the illegal-flow state following the pairing part of
model. The test cases containing I/Os are from
Fig.4,\cite{DBLP:journals/ijisec/FrancescoM07}, and named
F$_1$$\sim$F$_8$ in Fig.\ref{fig:sto_mat}. These experiments show
that the store-match pattern can give an overall 41.4\% reduction on
the cost of verification. The number of bits of variable is set to 3
as well.

\section{Conclusion}\label{sec:conclusion}

We propose a security property on the where-dimension of
declassification. The property is proved complying with the three
classical prudent principles. We also give a precise enforcement
based on the reachability analysis of pushdown system derived by a
variant of self-composition. To immigrate our approach to the
properties on other dimensions of declassification, e.g. the
\emph{delimited release}\cite{DBLP:conf/isss2/SabelfeldM03} on the
what-dimension, the key point is to focus on the
indistinguishability of declassified expressions on the pair of
initial states. The study on the enforcement of properties on the
other dimensions is left to our future work.

\section*{Acknowledgment}

We thank Alexander Lux for providing the valuable proofs and
explanations of the theorems in their work. We also thank Ennan Zhai for helpful comments
and the anonymous reviewers for useful feedback.
This research is partially supported by the National Natural Science
Foundation of China under Grant No.60773163, No.60911140102,
%the National Significant Science and Technology Projects (HGJ) under grant No. 2009ZX01039-001-001,
The National Key Technology R\&D Program in the 11th five-year Period under Grant No.2008BAH33B01,
as well as the PKU Project PKU-PY2010-005.

% trigger a \newpage just before the given reference
% number - used to balance the columns on the last page
% adjust value as needed - may need to be readjusted if
% the document is modified later
%\IEEEtriggeratref{8}
% The "triggered" command can be changed if desired:
%\IEEEtriggercmd{\enlargethispage{-5in}}

% references section

%\newcommand{\BIBdecl}{\setlength{\itemsep}{0.1\baselineskip plus 0.1\baselineskip minus 0.1\baselineskip}}

%\nocite{schwoon_phd}
\bibliographystyle{IEEEtran}
\bibliography{IEEEabrv,mybib}

% Generated by IEEEtran.bst, version: 1.12 (2007/01/11)
\begin{thebibliography}{10}
\providecommand{\url}[1]{#1}
\csname url@samestyle\endcsname
\providecommand{\newblock}{\relax}
\providecommand{\bibinfo}[2]{#2}
\providecommand{\BIBentrySTDinterwordspacing}{\spaceskip=0pt\relax}
\providecommand{\BIBentryALTinterwordstretchfactor}{4}
\providecommand{\BIBentryALTinterwordspacing}{\spaceskip=\fontdimen2\font plus
\BIBentryALTinterwordstretchfactor\fontdimen3\font minus
  \fontdimen4\font\relax}
\providecommand{\BIBforeignlanguage}[2]{{%
\expandafter\ifx\csname l@#1\endcsname\relax
\typeout{** WARNING: IEEEtran.bst: No hyphenation pattern has been}%
\typeout{** loaded for the language `#1'. Using the pattern for}%
\typeout{** the default language instead.}%
\else
\language=\csname l@#1\endcsname
\fi
#2}}
\providecommand{\BIBdecl}{\relax}
\BIBdecl

\bibitem{Sabelfeld03}
A.~Sabelfeld and A.~C. Myers, ``Language-based information-flow security,''
  \emph{IEEE Journal on Selected Areas in Communications}, vol.~21, no.~1, pp.
  5--19, 2003.

\bibitem{DBLP:conf/sp/GoguenM82}
J.~A. Goguen and J.~Meseguer, ``Security policies and security models,'' in
  \emph{IEEE Symposium on Security and Privacy}, 1982, pp. 11--20.

\bibitem{DBLP:journals/lisp/SabelfeldS01}
A.~Sabelfeld and D.~Sands, ``A per model of secure information flow in
  sequential programs,'' \emph{Higher-Order and Symbolic Computation}, vol.~14,
  no.~1, pp. 59--91, 2001.

\bibitem{DBLP:conf/sosp/MyersL97}
A.~C. Myers and B.~Liskov, ``A decentralized model for information flow
  control,'' in \emph{SOSP}, 1997, pp. 129--142.

\bibitem{DBLP:journals/jcs/SabelfeldS09}
A.~Sabelfeld and D.~Sands, ``Declassification: Dimensions and principles,''
  \emph{Journal of Computer Security}, vol.~17, no.~5, pp. 517--548, 2009.

\bibitem{DBLP:conf/aplas/MantelS04}
H.~Mantel and D.~Sands, ``Controlled declassification based on intransitive
  noninterference,'' in \emph{APLAS}, ser. Lecture Notes in Computer Science,
  W.-N. Chin, Ed., vol. 3302.\hskip 1em plus 0.5em minus 0.4em\relax Springer,
  2004, pp. 129--145.

\bibitem{DBLP:journals/jcs/MatosB09}
A.~A. Matos and G.~Boudol, ``On declassification and the non-disclosure
  policy,'' \emph{Journal of Computer Security}, vol.~17, no.~5, pp. 549--597,
  2009.

\bibitem{DBLP:conf/esop/MantelR07}
H.~Mantel and A.~Reinhard, ``Controlling the what and where of declassification
  in language-based security,'' in \emph{ESOP}, ser. Lecture Notes in Computer
  Science, R.~D. Nicola, Ed., vol. 4421.\hskip 1em plus 0.5em minus 0.4em\relax
  Springer, 2007, pp. 141--156.

\bibitem{DBLP:conf/esop/BrobergS06}
N.~Broberg and D.~Sands, ``Flow locks: Towards a core calculus for dynamic flow
  policies,'' in \emph{ESOP}, ser. Lecture Notes in Computer Science,
  P.~Sestoft, Ed., vol. 3924.\hskip 1em plus 0.5em minus 0.4em\relax Springer,
  2006, pp. 180--196.

\bibitem{DBLP:conf/sp/AskarovS07}
A.~Askarov and A.~Sabelfeld, ``Gradual release: Unifying declassification,
  encryption and key release policies,'' in \emph{IEEE Symposium on Security
  and Privacy}.\hskip 1em plus 0.5em minus 0.4em\relax IEEE Computer Society,
  2007, pp. 207--221.

\bibitem{DBLP:conf/csfw/BartheDR04}
G.~Barthe, P.~R. D'Argenio, and T.~Rezk, ``Secure information flow by
  self-composition,'' in \emph{CSFW}.\hskip 1em plus 0.5em minus 0.4em\relax
  IEEE Computer Society, 2004, pp. 100--114.

\bibitem{DBLP:conf/sas/TerauchiA05}
T.~Terauchi and A.~Aiken, ``Secure information flow as a safety problem,'' in
  \emph{SAS}, ser. Lecture Notes in Computer Science, C.~Hankin and
  I.~Siveroni, Eds., vol. 3672.\hskip 1em plus 0.5em minus 0.4em\relax
  Springer, 2005, pp. 352--367.

\bibitem{DBLP:conf/esorics/Naumann06}
D.~A. Naumann, ``From coupling relations to mated invariants for checking
  information flow,'' in \emph{ESORICS}, ser. Lecture Notes in Computer
  Science, D.~Gollmann, J.~Meier, and A.~Sabelfeld, Eds., vol. 4189.\hskip 1em
  plus 0.5em minus 0.4em\relax Springer, 2006, pp. 279--296.

\bibitem{suncong2}
C.~Sun, L.~Tang, and Z.~Chen, ``Secure information flow in java via
  reachability analysis of pushdown system,'' in \emph{QSIC '10}.\hskip 1em
  plus 0.5em minus 0.4em\relax IEEE Computer Society, 2010, pp. 142--150.

\bibitem{DBLP:conf/popl/LiZ05}
P.~Li and S.~Zdancewic, ``Downgrading policies and relaxed noninterference,''
  in \emph{POPL}, J.~Palsberg and M.~Abadi, Eds.\hskip 1em plus 0.5em minus
  0.4em\relax ACM, 2005, pp. 158--170.

\bibitem{DBLP:conf/ifip1-7/LuxM08}
A.~Lux and H.~Mantel, ``Who can declassify?'' in \emph{FAST}, ser. Lecture
  Notes in Computer Science, P.~Degano, J.~D. Guttman, and F.~Martinelli, Eds.,
  vol. 5491.\hskip 1em plus 0.5em minus 0.4em\relax Springer, 2008, pp. 35--49.

\bibitem{DBLP:conf/isss2/SabelfeldM03}
A.~Sabelfeld and A.~C. Myers, ``A model for delimited information release,'' in
  \emph{ISSS}, ser. Lecture Notes in Computer Science, K.~Futatsugi,
  F.~Mizoguchi, and N.~Yonezaki, Eds., vol. 3233.\hskip 1em plus 0.5em minus
  0.4em\relax Springer, 2003, pp. 174--191.

\bibitem{DBLP:conf/csfw/SabelfeldS00}
A.~Sabelfeld and D.~Sands, ``Probabilistic noninterference for multi-threaded
  programs,'' in \emph{CSFW}, 2000, pp. 200--214.

\bibitem{thispaper_tr}
C.~Sun, L.~Tang, and Z.~Chen, ``A new enforcement on declassification with
  reachability analysis,'' Institute of Software, School of EECS, Peking
  University, Tech. Rep., 2010,
  \url{http://infosec.pku.edu.cn/~suncong/sun2010d-tr.pdf}.

\bibitem{schwoonphd}
S.~Schwoon, ``Model checking pushdown systems,'' Ph.D. dissertation, Technical
  University of Munich, Munich, Germany, 2002.

\bibitem{DBLP:journals/tc/Bryant86}
R.~E. Bryant, ``Graph-based algorithms for boolean function manipulation,''
  \emph{IEEE Trans. Computers}, vol.~35, no.~8, pp. 677--691, 1986.

\bibitem{moped}
S.~Kiefer, S.~Schwoon, and D.~Suwimonteerabuth, ``Moped: A model-checker for
  pushdown systems,'' 2002,
  \url{http://www.fmi.uni-stuttgart.de/szs/tools/moped/}.

\bibitem{suncong1}
C.~Sun, L.~Tang, and Z.~Chen, ``Secure information flow by model checking
  pushdown system,'' in \emph{UIC-ATC '09}.\hskip 1em plus 0.5em minus
  0.4em\relax IEEE Computer Society, 2009, pp. 586--591.

\bibitem{remoplaintro}
J.~Hole\v{c}ek, D.~Suwimonteerabuth, S.~Schwoon, and J.~Esparza, ``Introduction
  to remopla,'' 2006,
  \url{http://www.fmi.uni-stuttgart.de/szs/tools/moped/remopla-intro.pdf}.

\bibitem{DBLP:conf/pldi/AskarovS07}
A.~Askarov and A.~Sabelfeld, ``Localized delimited release: combining the what
  and where dimensions of information release,'' in \emph{PLAS}, M.~W. Hicks,
  Ed.\hskip 1em plus 0.5em minus 0.4em\relax ACM, 2007, pp. 53--60.

\bibitem{DBLP:journals/ijisec/FrancescoM07}
N.~De~Francesco and L.~Martini, ``Instruction-level security typing by abstract
  interpretation,'' \emph{Int. J. Inf. Sec.}, vol.~6, no. 2-3, pp. 85--106,
  2007.

\end{thebibliography}

%\begin{thebibliography}{1}
%
%\bibitem{IEEEhowto:kopka}
%H.~Kopka and P.~W. Daly, \emph{A Guide to \LaTeX}, 3rd~ed.\hskip 1em plus
%  0.5em minus 0.4em\relax Harlow, England: Addison-Wesley, 1999.
%
%\end{thebibliography}

%%%%%%%%%%%%%%%%%%%%%%%%%%%%%%%%%%%%%%%%%%%%%%%%%%%%%%%%%%%%%%%%%%%%%%%%%%%%%%%%%%%%%%%%%%%%%%%%%%%%%%%%%%%%%%%%%%%%

\newpage

\section*{Appendix}

\begin{proof}[Proof of Lemma \ref{lemma:consist}]
Suppose any trace of the program $P[C'/C]$ is in a form of\\
$(\mu,\mathcal{I},\mathcal{O},p,q,P[C'/C])\rightarrow^*
(\mu_j,\mathcal{I},\mathcal{O}_j,p_j,q_j,C';P_j)\twoheadrightarrow
(\mu_k,\mathcal{I},\mathcal{O}_k,p_k,q_k,P_j)\rightarrow^*
(\mu_f,\mathcal{I},\mathcal{O}_f,p_f,q_f,\textbf{skip})$.\\
Because $C$ and $C'$ are semantically equivalent, we also have
$(\mu_j,\mathcal{I},\mathcal{O}_j,p_j,q_j,C;P_j)\twoheadrightarrow
(\mu_k,\mathcal{I},\mathcal{O}_k,p_k,q_k,P_j)$. Moreover, since $C$
and $C'$ are declassification-free, the substitution will not
influence the conjunction of equivalence on declassified expressions
in $P[C'/C]$. Therefore the indistinguishability on the final
configurations, that is
$\mu_{n+1}\sim_\ell\mu_{n+1}'\wedge\mathcal{O}_{n+1}\sim_\ell\mathcal{O}_{n+1}'$,
holds before and after the substitution.
\end{proof}
\begin{proof}[Proof of Lemma \ref{lemma:conservativity}]From the
operational semantics we can see $\rightarrow_d$ can only occurs
when a $\mathpzc{declass}$ command is executed and the declassified
expression contains some information with a security domain higher
than the security domain of $x$. $P$ has no declassification implies
that in any trace of computation of $P$ there is no $\rightarrow_d$.
The where-security of $P$ degenerates to have $n=0$. Therefore the
where-security becomes noninterference according to the definition
and $\mu_f\equiv\mu_1,\mathcal{O}_f\equiv\mathcal{O}_1$.
\end{proof}
\begin{proof}[Proof of Lemma \ref{lemma:monotonicity}] There are
actually two cases on whether the substitution introduces a real
declassification.
\begin{compactenum}[1.]
\item If $\sigma(e)\preceq\sigma(x)$, the computation of
$x:=\mathpzc{declass}(e)$ is identical to the ordinary assignment
$x:=e$ and $\rightarrow$ is not labeled as $\rightarrow_d$. The
where-security of $P[x:=\mathpzc{declass}(e)/x:=e]$ does not change
compared with the where-security of $P[x:=e]$.

\item Suppose we have $\sigma(x)\prec \sigma(e)$. The computation of $x:=e$ in the two correlative runs of $P[x:=e]$ are
like $(\mu,\mathcal{I},\mathcal{O},p,q,P[x:=e])\rightarrow^*
(\mu_j,\mathcal{I},\mathcal{O}_j,p_j,q_j,x:=e;P_j)\rightarrow
(\mu_j[x\mapsto\mu_j(e)],\mathcal{I},\mathcal{O}_j,p_j,q_j,P_j)\rightarrow^*
(\mu_{n+1},\mathcal{I},\mathcal{O}_{n+1},p_{n+1},q_{n+1},\textbf{skip})$
and $(\mu',\mathcal{I}',\mathcal{O}',p',q',P[x:=e])\rightarrow^*
(\mu_j',\mathcal{I}',\mathcal{O}_j',p_j',q_j',x:=e;P_j)\rightarrow
(\mu_j'[x\mapsto\mu_j'(e)],\mathcal{I}',\mathcal{O}_j',p_j',q_j',P_j')\rightarrow^*
(\mu_{n+1}',\mathcal{I}',\mathcal{O}_{n+1}',p_{n+1}',q_{n+1}',\textbf{skip})$.
From the premise of where-security of
$P[x:=\mathpzc{declass}(e)/x:=e]$ we have
$\bigwedge_{k=1..n}(\mu_{k_s}\sim_\ell\mu_{k_s}'\wedge
\mu_{k_s}(e_k)=\mu_{k_s}'(e_k))$. That implies
$\bigwedge_{k=1..n,k\neq j}(\mu_{k_s}\sim_\ell\mu_{k_s}'\wedge
\mu_{k_s}(e_k)=\mu_{k_s}'(e_k))$ and because $P[x:=e]$ is
where-secure, we have $\bigwedge_{k=1..n,k\neq
j}(\mu_{k_t}\sim_\ell\mu_{k_t}')$. Because
$\mu_j\sim_\ell\mu_j'\wedge\mu_j(e)=\mu_j'(e)$, according to the
semantics, we have $\mu_j[x\mapsto
\mu_j(e)]\sim_\ell\mu_j'[x\mapsto\mu_j'(e)]$, that is
$\mu_{j_t}\sim_\ell\mu_{j_t}'$ for $P[x:=\mathpzc{declass}(e)/x:=e]$
and therefore $\bigwedge_{k=1..n}(\mu_{k_t}\sim_\ell\mu_{k_t}')$. On
the other hand, since the substitution does not change the semantics
of program, restricting the premise $\bigwedge_{k=1..n,k\neq
j}(\mu_{k_s}(e_k)=\mu_{k_s}'(e_k'))$ with a conjunction to
$\mu_j(e)=\mu_j'(e)$ will not influence the consequence that
$\mu_{n+1}\sim_\ell\mu_{n+1}'\wedge\mathcal{O}_{n+1}\sim_\ell\mathcal{O}_{n+1}'$.
The where-security of $P[x:=\mathpzc{declass}(e)/x:=e]$ is proved.
\end{compactenum}
\end{proof}

\begin{proof}[Proof of Theorem~\ref{theorem:correctness}]
Suppose program $P$ violates the where-security property, that means
\begin{equation*}
\exists k_0.\mu_{k_0,s}\sim_\ell\mu_{k_0,s}'\wedge
\mu_{k_0,s}(e_k)=\mu_{k_0,s}'(e_k')\wedge\neg
(\mu_{k_0,t}\sim_\ell\mu_{k_0,t}')
\end{equation*}
or
\begin{equation*}
\bigwedge_{k=1..n}(\mu_{k_s}(e_k)=\mu_{k_s}'(e_k'))\wedge\neg(\mathcal{O}_{n+1}\sim_\ell\mathcal{O}_{n+1}')
\end{equation*}
Here the $\mu_{n+1}\sim_\ell\mu_{n+1}'$ has been adapted to
$\mathcal{O}_{n+1}\sim_\ell\mathcal{O}_{n+1}'$ by modeling final
outputs of public variables. If the first relation is satisfied, we
have in $x_k:=\mathpzc{declass}(e_k)$ and
$x_k':=\mathpzc{declass}(e_k')$ the variable $x_k$ and $x_k'$ are
different variables. Therefore the respective pushdown rules must
have different $\gamma_j$ as the label for the stack symbol
$\mathpzc{declass}_{\text{entry}}^{\gamma_j}$, which we suppose to
be $\gamma_k, \gamma_{k'}$ and $\gamma_k\neq\gamma_{k'}$. From the
DS$_{\gamma_k}$ and DM$_{\gamma_{k'}}$ we have
$\mathcal{D}[\rho(\gamma_{k'})]=e_k'$. The value in
$\mathcal{D}[\rho(\gamma_{k'})]$ is irrelevant to $e_k$ and $x_k$ in
the second run is not restricted by DM$_{\gamma_{k'}}$. When the
final $x_k$ and $x_k'$ are outputted, the inequality of final $x_k$
of correlative executions makes the state \textit{error} reachable
according to the rule OM$_{\sigma(x_k)}$. If the second relation is
satisfied, $\exists i.q_i\neq q_i'\vee (\exists 0\leq
k_0<q_i.\mathcal{O}_i[k_0]\neq\mathcal{O}_i'[k_0])$. If $q_i\neq
q_i'$, we can suppose $q_i<q_i'$ because the correlative runs are
symmetrical. Then there must be some $e$ of
$\textit{output}(e,\mathcal{O}_i')$ in $P$ that should be compared
with the indefinite value in $\mathcal{O}_{i,n+1}[q_i]$ during the
execution of the second run. Otherwise we have
$\mathcal{O}_{i,n+1}[k_0]\neq\mathcal{O}_{i,n+1}'[k_0]$. Then if
$\mathcal{O}_{i,n+1}'[k_0]$ is generated by
$\textit{output}(e,\mathcal{O}_i)$, the second run is directed by
$\mathcal{O}_i[k_0]\neq e$ according to the rule OM$_i$ and
\textit{error} is reachable. From the contrapositive the theorem is
proved.
\end{proof}

%%%%%%%%%%%%%%%%%%%%%%%%%%%%%%%%%%%%%%%%%%%%%%%%%%%%%%%%%%%%%%%%%%%%%%%%%%%%%%%%%%%%%%%%%%%%%%%%%%%%%%%%%%%%%%%%%%%%%%%%%%%%%%%%5

%if the two correlative runs of $P$ have different length (e.g. resp.
%$n_1$ and $n_2$) of $\mathpzc{declass}$ commands (we can suppose
%$n_1>n_2$ because the runs are symmetrical),

%Let $\llbracket P\rrbracket$

\end{document}